\DeclareSymbolFont{bbold}{U}{bbold}{m}{n}
\DeclareSymbolFontAlphabet{\mathbbold}{bbold}
\newtheorem{prop}{Proposition}
\DeclareFontFamily{U}{mathx}{\hyphenchar\font45}
\DeclareFontShape{U}{mathx}{m}{n}{<-> mathx10}{}
\DeclareSymbolFont{mathx}{U}{mathx}{m}{n}
\DeclareMathAccent{\widebar}{0}{mathx}{"73}
\begin{document}
 
\allowdisplaybreaks
\normalem
\title{The Hierarchical Parity Model}

\author[ ]{Gavin~S.~Hartnett\footnote{hartnett@rand.org}}

\affil[ ]{RAND Corporation, Santa Monica, CA 90401, USA}

\date{}
\setcounter{Maxaffil}{0}
\renewcommand\Affilfont{\itshape\small}

\maketitle 

\begin{abstract}
Hierarchical spin-glasses are Ising spin models defined by recursively coupling together two equally-sized sub-systems. In this work a new hierarchical spin system is introduced wherein the sub-systems are recursively coupled together through the parity of their spins. Exact Renormalization Group recursion equations for many correlators may be derived for this Hierarchical Parity Model, even for completely general couplings. Moreover, the model is computationally tractable in that $O(N)$ algorithms exist for the computation of the partition function and ground state energy. In the special case where the couplings are all equal, the model is shown to exhibit a thermal phase transition. 
\end{abstract}

\baselineskip16pt

\section{Introduction}
The defining property of hierarchical spin-glasses is that the Hamiltonian may be defined recursively in terms of the Hamiltonians of the two sub-systems formed by dividing the spins into two equally sized groups:
\begin{equation}
    \label{eq:HM}
    H_{n}(s) = H_{n-1}(s_L) + H_{n-1}(s_R) + \epsilon_{n}(s) \,.
\end{equation}
Here $s$ denotes all the Ising spins $s_1, ..., s_{N}$, with $N=2^n$ constrained to be a power of 2, $s_L$ denotes first half of spins, $s_1, ..., s_{2^{n-1}}$ and $s_R$ the second half, $s_{2^{n-1}+1}, ..., s_{2^n}$. The $L,R$ notation stands for ``left'' and ``right'', and corresponds to the fact that the recursion structure is naturally associated with a balanced binary tree. The two sub-systems are coupled together through an interaction term $\epsilon_n(s)$ which specifies the model. 

Dyson was the first to study models of this form, introducing a ferromagnetic model known as the Dyson Hierarchical Model (DHM) \cite{dyson1969existence}, for which the interaction term is
\begin{equation*}
    \epsilon_n(s) = - J \, C^{n} \left(\frac{1}{2^{n}} \sum_{i=1}^{2^{n}} s_i \right)^2 \,.
\end{equation*} 
Here $J>0$ is a ferromagnetic coupling, and $C$ controls how the interaction strength scales with system size. At each level in the recursion, the two sub-systems are coupled together via the square of the magnetization of the combined system. The hierarchical structure greatly aids the analysis of this model, and in particular leads to the key result that the Wilsonian Renormalization Group (RG) equations are exact when applied to this model. 

Due to the exactness of the RG equations, the DHM and other hierarchical models are useful toy models for studying and further developing the Renormalization Group (RG). In particular, it has proven especially difficult to develop a RG theory for spin-glasses, especially non-mean field systems. As a result, spin-glass versions of hierarchical models have been introduced and used to develop a theory of renormalization for these systems which might extend to other, non-hierarchical spin-glasses.\footnote{For an excellent review of this subject, see \cite{castellana2013renormalization}.} For example, the Hierarchical Random Energy Model (HREM) \cite{castellana2010hierarchical, castellana2011real} is defined by taking the couplings $\epsilon_n(s)$ to be independent and identically distributed random variables (with no dependence on the spin-configuration). Another well-studied example is the Hierarchical Edwards-Anderson (HEA) model \cite{franz2009overlap, castellana2010renormalization, castellana2011renormalization, castellana2011real}, for which the interaction term is 
\begin{equation*}
    \epsilon_{n}(s) = - \frac{C^{2n}}{2^n} \sum_{i < j = 1}^{2^{n}} J_{ij} s_i s_j \,,
\end{equation*}
where the couplings are standard normal random variables, i.e. $J_{ij} \sim \mathcal{N}(0,1)$.

In this work, a new hierarchical Ising spin model is introduced for which the coupling term $\epsilon_n(s)$ is simply the parity of the combined spins from the two sub-systems. The motivation for considering such an interaction is to develop a toy model of a spin system which is capable of exhibiting geometric frustration and large degeneracies, and which nonetheless exhibits a high degree of analytic and computational tractability. However, more work is needed to determine whether this model exhibits a proper spin-glass phase, and therefore we will refer to it as a spin system, as opposed to a spin-glass. 

This paper is organized as follows. In Section~\ref{sec:model} we introduce the model and derive its key recursion relations. In Section~\ref{sec:complexity} we discuss the computational complexity of the model, and show that $O(N)$ algorithms exist for computing both the partition function and the ground state. In Section~\ref{sec:widthsymmetric} we consider a special case of the model in which the couplings are width-symmetric (i.e., they are uniform within a given level of the hierarchy). When the couplings are furthermore taken to be equal across all levels of the hierarchy, the model is shown to exhibit a thermal phase transition. Finally, in Section~\ref{sec:discussion} we conclude with a discussion. The details of the $O(N)$ ground state algorithm are provided in Appendix~\ref{app:algorithm}, and Appendix~\ref{app:subparity} contains the derivation of the recursion relation for sub-parities (defined below). Lastly, a first step towards studying the model in the presence of disorder is taken in Appendix~\ref{sec:disorder}, where we show how automatic differentiation may be used to exactly compute thermodynamic quantities for finite system size. We have released the code used for some of the numerical analyses done in this work here: \url{https://github.com/gshartnett/hierarchical}.

\section{The Hierarchical Parity Model \label{sec:model}}
Like all hierarchical models, the hierarchical parity model may be defined recursively. At each step in the recursion, the Hamiltonian of the system is the sum of the left and right sub-system Hamiltonians, plus an interaction term that couples them together. It is convenient to work in terms of the associated balanced binary free, depicted in Fig.~\ref{fig:binary_tree}. The coordinates of the nodes are $k,p$, with $k=0,...,n$ the ``height'' coordinate, measured from the leaf nodes with $k=0$ to the root node with $k=n$, and $p = 1, ..., 2^{n-k}$ the ``width'' coordinate. Using these coordinates, the defining recursion relation for the model is given by:
\begin{equation}
    \label{eq:Hrecursive}
    H_{k,p} := H_{k-1, 2p-1} + H_{k-1, 2p} - J_{k, p} s_{[(p-1) 2^k + 1 : p 2^k]} \,.
\end{equation}
Here the notation $s_{[a:b]}$ denotes the product of spins $a$ through $b$, i.e. $s_{[a:b]} := \prod_{i=a}^b s_i$, and $J_{k,p}$ are arbitrary couplings. The Hamiltonian of the root node corresponds to the full system, $H := H_{n,1}$, and the Hamiltonians of the leaf nodes are simply $H_{0,p} := - J_{0,p} s_p$. At each level in the recursion, two distinct spin-systems are coupled together through their overall parity. The Hamiltonian may also be defined non-recursively as the sum of a parity interaction associated with each node:
\begin{equation}
    \label{eq:modeldefinition}
    H = -\sum_{k=0}^n \sum_{p=1}^{2^{n-k}} J_{k,p} s_{[(p-1) 2^k+1: p \, 2^k]} \,.
\end{equation}
To clarify the notation, the full Hamiltonian for $n=3$ is:
\begin{align}
    H = 
    &- J_{3,1} s_1 s_2 s_3 s_4 s_5 s_6 s_7 s_8 
    - J_{2,1} s_1 s_2 s_3 s_4 
    - J_{2,2} s_5 s_6 s_7 s_8 \nonumber \\
    &- J_{1,1} s_1 s_2 - J_{1,2} s_3 s_4 - J_{1,3} s_5 s_6 - J_{1,4} s_7 s_8 \nonumber \\
    &- J_{0,1} s_1 - J_{0,2} s_2 - J_{0,3} s_3 - J_{0,4} s_4 - J_{0,5} s_5 - J_{0,6} s_6 - J_{0,7} s_7 - J_{0,8} s_8 \,. 
\end{align}

\begin{figure}
    \centering
    \includegraphics[width=0.6\textwidth]{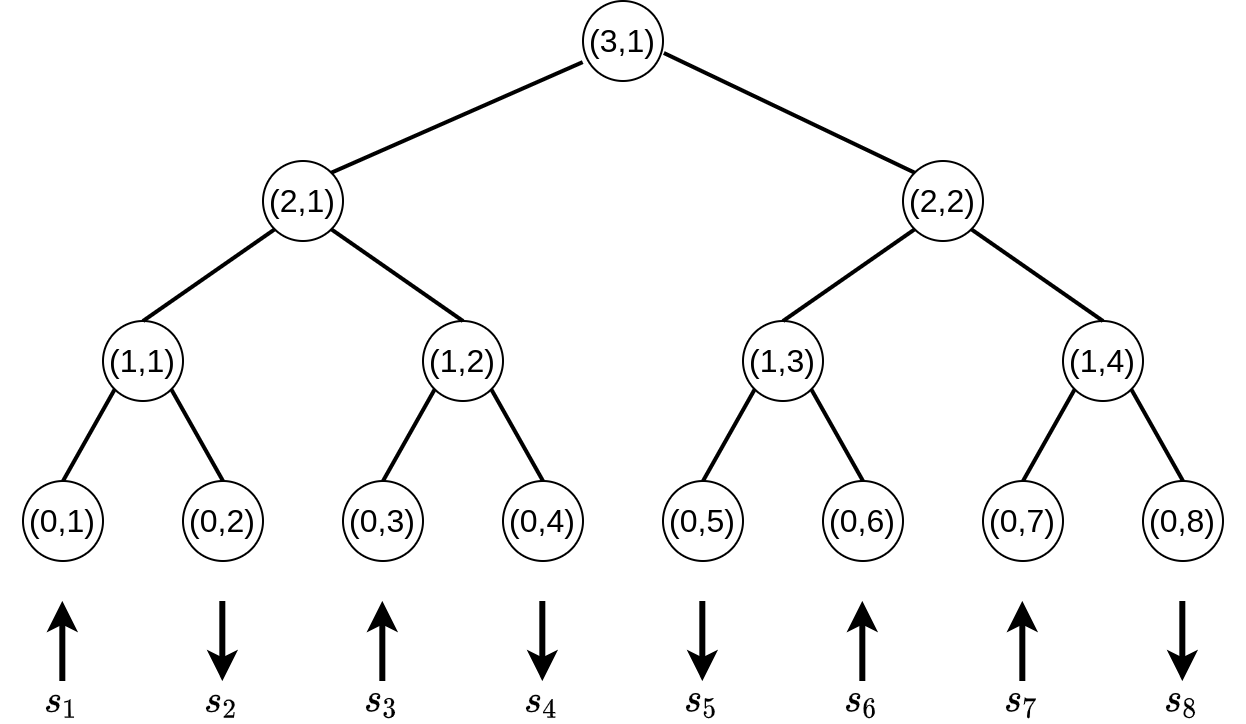}
    \caption{The binary tree corresponding to the the system of $N=2^3=8$ spins, with an arbitrary spin configuration shown below. The nodes are given the coordinates $(k,p)$, with $k$ denoting the height of the tree (measured from the leaf nodes) and $p$ is the width coordinate.}
    \label{fig:binary_tree}
\end{figure}

The recursive property also extends to the partition function. Letting $Z_{k,p}$ denote the partition function associated the $(k,p)$ sub-system with Hamiltonian $H_{k,p}$,
\begin{equation}
    Z_{k,p} := \sum_{\{ s_{(p-1) 2^k + 1}, ..., s_{p 2^k} \}} e^{-\beta H_{k,p}} \,,
\end{equation}
it can be shown that
\begin{align}
    \label{eq:partitionfunction}
    \ln Z_{k,p} &= \ln Z_{k-1, 2p-1} + \ln Z_{k-1, 2p} \\
    &+ \ln \Big[\cosh( \beta J_{k,p}) + P_{k-1,2p-1} P_{k-1, 2p} \sinh(\beta J_{k,p}) \Big] \nonumber \,,
\end{align}
where 
\begin{align}
    P_{k,p} := \beta^{-1} \partial_{J_{k, p}} \ln Z_{k, p} = \langle s_{[(p-1)2^k + 1: p 2^k]} \rangle_{k,p} 
\end{align} 
is the expectation value of the parity of the spins associated with the sub-system $H_{k,p}$.\footnote{Note that the Boltzmann factor appearing in the thermal average is the one associated to the sub-system, $e^{-\beta H_{k,p}}$, and not the Boltzmann factor of the full system, $e^{-\beta H}$.} This expectation value can also be shown to satisfy a recursion relation:
\begin{equation}
    \label{eq:parityrecursion}
    P_{k,p} = \frac{\sinh(\beta J_{k,p}) + \cosh(\beta J_{k,p}) P_{k-1, 2p-1} P_{k-1, 2p}}{\cosh(\beta J_{k,p}) + \sinh(\beta J_{k,p}) P_{k-1, 2p-1} P_{k-1, 2p}} \,.
\end{equation}
The initial conditions of the recursions are that ${Z_{0,p} = 2\cosh(\beta J_{0, p})}$, and ${P_{0,p} = \tanh(\beta J_{0,p})}$.

$P_{k,p}$ is the thermal expectation value of the parity, which can take the values $\pm 1$. Therefore, Eq.~\ref{eq:parityrecursion} may be restated as a recursion relation between the parity probability distribution, rather than the expectation value. Let 
\begin{equation}
    \mathbb{P}_{k,p}(P) := \sum_{\{ s_{(p-1) 2^k + 1}, ..., s_{p 2^k} \}} \frac{e^{-\beta H_{k,p}}}{Z_{k,p}} \delta_{P, s[(p-1)2^k + 1: p 2^k]}
\end{equation}
be the probability mass function over the parity, with $P \in \{-1, 1\}$. This relates to the expectation value via ${P_{k,p} = \mathbb{P}_{k,p}(1) - \mathbb{P}_{k,p}(-1) = 2 \mathbb{P}_{k,p}(1) - 1}$ (since ${\mathbb{P}_{k,p}(-1) = 1 - \mathbb{P}_{k,p}(1)}$). This distribution can be shown to satisfy the recursion: 

\begin{equation}
    \mathbb{P}_{k,p}(P) = \left(\frac{Z_{k-1,2p-1} Z_{k-1,2p-2}}{Z_{k,p}}\right) e^{\beta J_{k,p} P} \sum_{P_L, P_R \in \{-1,1\}} \mathbb{P}_{k-1,2p-1}(P_L) \mathbb{P}_{k-1,2p-2}(P_R) \delta_{P, P_L P_R} \,,
\end{equation}
where $P_{L,R}$ denote the parities of the left and right sub-systems. 

Actually, exact recursion relations can be defined for many more correlators. $P_{k,p}$ is the expectation value of the parity of all $2^k$ spins in the $(k,p)$ sub-system, and it is also interesting to consider the parity of subsets of these $2^k$ spins, also within the $(k,p)$ sub-system. Therefore, introduce the notation $P_{k,p}^{k',p'}$ to denote the parity of the $2^{k'}$ block of spins $s_{[(p'-1)2^{k'}+ 1: p' 2^{k'}]}$ within the $(k,p)$ sub-system, i.e.
\begin{equation}
    P_{k,p}^{k',p'} := \beta^{-1} \partial_{J_{k',p'}} \ln Z_{k,p} \,.
\end{equation}
Note that for this expression to be sensible, $(k', p')$ must be the coordinates of a descendant node of the $(k,p)$ node. To give some examples, for $k'=k-1, p'=2p-1$, this corresponds to the expectation value of the parity of the left descendant spins. For $k'=k-2, p'=4p-3$, this corresponds to the left-left descendant spins, and so on. And $k'=1, p' = 1$ corresponds to $\langle s_1 s_2 \rangle_{k,p}$. 

The recursion relation for $P_{k,p}^{k',p'}$ can be compactly written in terms of a path on the binary tree, at the expense of some additional notation. Let $\bm{n}$ denote the binary tree coordinates, and let $\bm{n}_0 = (k,p)$ denote the node of the sub-system in question, and $\bm{n}_L = (k',p')$ a child node of $\bm{n}_0$ corresponding to the spins of interest. Let $\bm{n}_a$, $a=0,...,L$ denote the sequence of nodes corresponding to the unique length-$L$ path $\mathcal{P}(\bm{n}_0, \bm{n}_L)$ in the tree from the sub-system node $\bm{n}_0$ to the destination node $\bm{n}_L$, with $L = k - k'$. Then, the recursion relation is:
\begin{align}
    \label{eq:subPrecursion}
    P_{\bm{n}_0}^{\bm{n}_L} = \beta^{-1} \partial_{J_{\bm{n}_L}} \sum_{\bm{n} \in \mathcal{P}(\bm{n}_0, \bm{n}_L)} & \ln \Big[ \cosh(\beta J_{\bm{n}}) + P_{\text{Left}(\bm{n})} P_{\text{Right}(\bm{n})} \sinh(\beta J_{\bm{n}}) \Big] \,. 
\end{align}
Here, Left$(\bm{n})$, Right$(\bm{n})$ denote the left or right descendants of node $\bm{n}$. The expression in the sum depends implicitly on $J_{\bm{n}_L}$ via the left or right parity for all nodes in the path except the final one, in which case the dependence is explicit. It is important to note that $P_{k,p}^{k',p'}$ can only represent the parity of collections of spins which comprise all of the descendants of a particular node in the binary tree - therefore, many (but not all) of the correlation functions of the system are governed by exact recursion relations, making this model very amenable to RG analysis. Notably, the nature of the parity interactions allowed these relations to be derived for \textit{arbitrary} couplings.

To gain further physical intuition for the model, a key insight is that it is capable of exhibiting geometric frustration at multiple scales. Considering an arbitrary $H_{k,p}$ sub-system, the top-level interaction encourages the parity of the $2^k$ sub-system spins to be aligned with the sign of the $J_{k,p}$ coupling. Similarly, the descendent interactions encourage the parities of the two $2^{k-1}$ sub-system spins to be aligned with the sign of their respective couplings, $J_{k-1,2p-1}$, $J_{k-1,2p}$. If these are not consistent with one another, i.e. if \begin{equation}
    \label{eq:geometricfrustration}
    \text{sign}(J_{k,p}) \neq \text{sign}(J_{k-1,2p-1} J_{k-1,2p}) \,,
\end{equation} 
then the system will exhibit a form of geometric frustration at the scale of $2^{k-1}$ spins. Moreover, there is a separate frustration condition for each of the $N-1$ $(k,p)$ sub-systems with $k = 1, ..., n$. Therefore, the system can exhibit frustration on multiple scales, provided that at least some of the couplings are anti-ferromagnetic (negative).  

Lastly, an important comment is in order regarding the single-spin couplings, $J_{0,i}$. If these are absent for a system with $N$ spins, then the system can be shown to be equivalent to multiple copies of $N/2$-spin systems. Concretely, if $s_1, ..., s_N$ are the spins of the original system with $J_{0,i} = 0$ $\forall i$, then by introducing the composite spin variables $S_j := s_{2j-1} s_{2j}$, with $j=1,...,N/2$, a new system with half the number of spins is formed which will have single-spin couplings given by the two-body couplings of the original system. For example, the two-body interaction term $J_{1,1} s_1 s_2$ in the original system will become  $J'_{0,1} S_1$ in the new system, with $J'_{0,1} = J_{1,1}$. Moreover, each state in the new system with $N/2$ spins will occur with degeneracy $2^{N/2}$ in the original system with $N$ spins - as required by the fact that there are $2^N$ states in total. Thus, throughout this work we will assume that the single-spin couplings are not all zero, since otherwise the system could be effectively reduced to a simpler system of fewer spins. A consequence of the above is that the single-spin couplings should be thought of as a necessary ingredient in the model definition, and not as an external applied field. This stands in contrast to more standard spin-glass models, such as the Sherrington-Kirkpatrick model or the Edwards-Anderson model.

\begin{figure}
    \centering
    \includegraphics[width=0.48\textwidth]{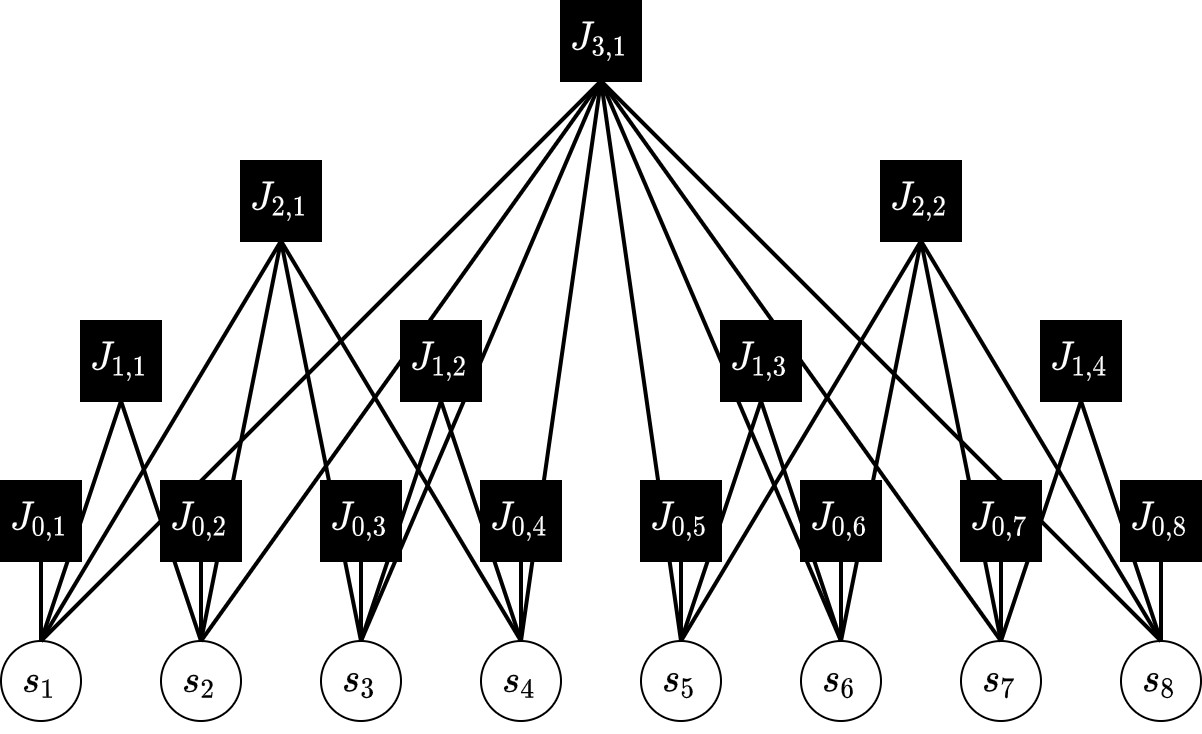}
    \caption{The factor graph for the case $n=3$, corresponding to $N = 2^3 = 8$ spins. The square nodes represent interaction terms, and the circles represent the spin variables.}
    \label{fig:factorgraph}
\end{figure}

\section{Computational Tractability \label{sec:complexity}}
The hierarchical structure is significant enough to render the model solvable, in the sense that many quantities of interest can be computed in $O(N)$ steps.

\begin{prop}
\label{prop:partition}
The partition function is computable in $O(N)$ steps.
\end{prop}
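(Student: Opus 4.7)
The plan is to give a direct constructive proof based on the recursion relations already derived. The recursion Eq.~\ref{eq:partitionfunction} expresses $\ln Z_{k,p}$ in terms of $\ln Z_{k-1,2p-1}$, $\ln Z_{k-1,2p}$, and the two child parities $P_{k-1,2p-1}, P_{k-1,2p}$, while Eq.~\ref{eq:parityrecursion} expresses $P_{k,p}$ in terms of those same child parities. Together with the leaf initial conditions $Z_{0,p} = 2\cosh(\beta J_{0,p})$ and $P_{0,p} = \tanh(\beta J_{0,p})$, these relations specify an unambiguous bottom-up computation on the binary tree.

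Concretely, I would proceed as follows. First, initialize the leaf pairs $(\ln Z_{0,p}, P_{0,p})$ for $p = 1, \ldots, N$; this is $N$ constant-time initializations. Then, for $k = 1, 2, \ldots, n$ in order, and for each $p = 1, \ldots, 2^{n-k}$, compute $(\ln Z_{k,p}, P_{k,p})$ from the two already-stored child pairs using Eq.~\ref{eq:partitionfunction} and Eq.~\ref{eq:parityrecursion}. After the final step I return $Z = \exp(\ln Z_{n,1})$. Correctness is immediate from the validity of the recursions, which was established in the preceding section.

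For the complexity count, each node update involves a bounded number of arithmetic operations together with a bounded number of evaluations of $\sinh$, $\cosh$, and $\ln$, each of which is treated as an $O(1)$ elementary step under the standard real-arithmetic cost model. The total number of node updates across all levels is
\begin{equation*}
    \sum_{k=0}^{n} 2^{n-k} = 2^{n+1} - 1 = 2N - 1 \,,
\end{equation*}
so the full computation uses $O(N)$ steps, as claimed.

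There is no real obstacle: the work has all been front-loaded into deriving the recursions in Section~\ref{sec:model}, and once those are in hand the proposition is essentially a complexity bookkeeping statement. The only subtlety worth flagging is the arithmetic model: we are implicitly assuming unit-cost real arithmetic (or equivalently, a fixed-precision floating-point model), since a bit-complexity analysis would require tracking the growth of numerators and denominators in Eq.~\ref{eq:parityrecursion} under exact rational arithmetic. This is the standard convention used in the hierarchical-model literature, so the $O(N)$ claim is the natural one.
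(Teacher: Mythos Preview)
Your proof is correct and follows essentially the same approach as the paper's: both use the recursions Eq.~\ref{eq:partitionfunction} and Eq.~\ref{eq:parityrecursion} to perform a bottom-up sweep over the $2N-1$ nodes of the binary tree, observing that each node update is $O(1)$. Your version is slightly more explicit (writing out the node-count sum and flagging the unit-cost arithmetic assumption), but the argument is the same.
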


\begin{proof}
By construction. Associate each index pair $(k,p)$ to a node in a balanced binary tree. The recursion relation Eq.~\ref{eq:partitionfunction} allows for the partition function at a given node to be computed, provided the partition function and parity operators of the children nodes are also known. Similarly, Eq.~\ref{eq:parityrecursion} can be used to compute the parity operators of the children nodes in terms of the parities of their children. Each computation takes $O(1)$ steps, and so by carrying out the computation at each of the $2N-1$ nodes in order of increasing $k$ (i.e. from the leaves to the root), the entire partition function may be computed in $O(N)$ steps.
\end{proof}

A similar result holds for the problem of computing the ground state (lowest energy configuration) and its degeneracy.
\begin{prop}
\label{prop:groundstate}
The ground state and its degeneracy is computable and in $O(N)$ steps.
\end{prop}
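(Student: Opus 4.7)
The plan is to mirror the partition-function argument of Proposition~\ref{prop:partition}, but with the log-sum-exp replaced by the tropical (min-plus) analog. The key observation is that, since only the overall parity $s_{[(p-1)2^k+1:p2^k]}$ of a sub-system couples to the rest of the tree, a two-state summary per node is sufficient to propagate ground-state information upward.

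Concretely, for each node $(k,p)$ I would introduce
\begin{equation*}
    E_{k,p}^{\pm} := \min \bigl\{ H_{k,p}(s) \; : \; s_{[(p-1)2^k+1:p2^k]} = \pm 1 \bigr\}, \qquad D_{k,p}^{\pm} := \#\bigl\{ s \; : \; H_{k,p}(s) = E_{k,p}^{\pm},\; s_{[\cdots]} = \pm 1\bigr\}.
\end{equation*}
At the leaves, $E_{0,p}^{\pm} = \mp J_{0,p}$ and $D_{0,p}^{\pm} = 1$. Because the sub-system parity factorizes as the product of the two child parities, the recursion Eq.~\ref{eq:Hrecursive} gives
\begin{align*}
    E_{k,p}^{+} &= -J_{k,p} + \min\bigl( E_{k-1,2p-1}^{+} + E_{k-1,2p}^{+},\; E_{k-1,2p-1}^{-} + E_{k-1,2p}^{-}\bigr), \\
    E_{k,p}^{-} &= +J_{k,p} + \min\bigl( E_{k-1,2p-1}^{+} + E_{k-1,2p}^{-},\; E_{k-1,2p-1}^{-} + E_{k-1,2p}^{+}\bigr),
\end{align*}
with the degeneracies $D_{k,p}^{\pm}$ obtained by summing the products $D_{k-1,2p-1}^{\sigma_L} D_{k-1,2p}^{\sigma_R}$ over only those sign combinations $(\sigma_L,\sigma_R)$ that attain the corresponding minimum. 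The root supplies the answer via $E_{\mathrm{gs}} = \min(E_{n,1}^{+}, E_{n,1}^{-})$, with total degeneracy equal to $D_{n,1}^{+}$, $D_{n,1}^{-}$, or their sum depending on which parities achieve the minimum.

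Once these recursions are in place, the complexity bound is immediate: each of the $2N-1$ tree nodes requires an $O(1)$ evaluation of a constant-size min/sum/comparison, so processing the tree bottom-up (in order of increasing $k$) costs $O(N)$ arithmetic operations in total, matching the structure of Proposition~\ref{prop:partition}.

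The only subtle step will be the bookkeeping for $D_{k,p}^{\pm}$ in the presence of ties: one must verify that the factorization of the parity constraint across the two sub-trees makes the configurations counted by $D_{k-1,2p-1}^{\sigma_L} D_{k-1,2p}^{\sigma_R}$ disjoint across distinct $(\sigma_L,\sigma_R)$ pairs, so that the degeneracy is additive. This is a straightforward consequence of the fact that fixing both child parities partitions the configuration space of the sub-system into four disjoint classes, but it is worth stating explicitly to make the argument airtight. The explicit implementation of this dynamic program is deferred to Appendix~\ref{app:algorithm}.
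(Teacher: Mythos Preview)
Your proposal is correct and matches the paper's approach exactly: both track, for each node $(k,p)$, the minimum energy and degeneracy within each parity class, combine the children via the four $(\sigma_L,\sigma_R)$ pairings, and invoke the $2N-1$ node count for the $O(N)$ bound. The paper's in-text proof is in fact less explicit than yours (the recursion formulas you write out appear only in Algorithm~\ref{alg:groundstate}); the one thing it adds is carrying along a representative configuration $\bm{s}_{k,p}^{(0)\pm}$ via concatenation of child states, which supplies the actual ground \emph{state} rather than just its energy---a trivial augmentation of your scheme.
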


\begin{proof}
By construction. A general description of the algorithm is provided here; Appendix \ref{app:algorithm} contains a more formal pseudo-code description. Note that if the ground state is degenerate, only a single ground state will be returned by this procedure. The basic idea is that the hierarchical structure of the model allows for the ground state at each node in the binary tree to be computed in terms of the ground states of the left and right children nodes. First, it will be useful to separate the states according to their parity, $\pm 1$, which controls the sign of the interaction between the left and right sub-systems. Let $\bm{s}_{k,p}^{(0)\pm}$ indicate the lowest energy state of the $(k,p)$ sub-system with parity $\pm 1$.
Note that these are the \textit{lowest} energy states, meaning that at least one but possibly both will be a ground state of the full system. In the event that there are multiple lowest energy states at each level, these will just correspond to a single representative state for each parity. 

For a given node $(k,p)$ in the binary tree the lowest energy parity $-1$ state will be one, or both, of $\bm{s}_{k-1,2p-1}^{(0)-} \parallel \bm{s}_{k-1,2p}^{(0)+}$, $\bm{s}_{k-1,2p-1}^{(0)+} \parallel \bm{s}_{k-1,2p}^{(0)-}$. Here $\parallel$ just means the concatenation of the left and right states. Similarly, the lowest energy parity $+1$ state will be one, or both, of $\bm{s}_{k-1,2p-1}^{(0)-} \parallel \bm{s}_{k-1,2p}^{(0)-}$, $\bm{s}_{k-1,2p-1}^{(0)+} \parallel \bm{s}_{k-1,2p}^{(0)+}$. By comparing the energies of these states, the lowest energy state of each parity may be found, together with its degeneracy. Therefore, the $\pm$ lowest-energy state of the full system may be computed recursively by keeping track of the $\pm$ lowest-energy state of each sub-system. The degeneracies may be similarly computed. This procedure is formalized in Algorithm~\ref{alg:groundstate}. Since there are $2N-1$ nodes to visit, and since each node requires $O(1)$ operations, this algorithm runs in $O(N)$ steps. 
\end{proof}

The above results establish that the hierarchical parity model is computationally tractable for arbitrary arbitrary couplings $J_{k,p}$. For general spin systems, computing the partition function or ground state takes an exponential number of steps, and therefore the hierarchical structure of the model allows for exponential speed-ups. 
The notion of a factor graph is often useful in the analysis of the computational tractability of spin systems and graphical models more generally. The factor graph is a bipartite graph with two types of nodes, variable (spin) nodes and factor (interaction) nodes. The factor graph for this model is depicted in Fig.~\ref{fig:factorgraph} for the case of $N = 2^3 = 8$ spins. When the factor graph is a tree (and thus has no loops), belief propagation may be used to efficiently solve a number of computational problems - including the calculation of marginal distributions of a single spin, the sampling of the Boltzmann distribution, and the calculation of the partition function \cite{mezard2009information}. However, as can be seen by direct inspection of Fig.~\ref{fig:factorgraph}, the factor graph in this case is not a tree, and therefore these results do not apply. Thus, Propositions~\ref{prop:partition} and \ref{prop:groundstate} are not merely consequences of well-known results of belief propagation.

\section{Width-Symmetric Model \label{sec:widthsymmetric}}
The above results demonstrate that the model is computational tractable for arbitrary couplings. In this section, we will consider a special case which admits an analytically tractable large-$N$ limit. The width-symmetric model is obtained by restricting the couplings to be independent of the width index $p$, while allowing them to scale with the height index $k$:
\begin{equation}
    J_{k,p} = J_k = 2^{k \sigma} J \,.
\end{equation}
Here $J$ is the bond strength parameter, with $J > 0$ corresponding to ferromagnetic interactions and $J <0$ to anti-ferromagnetic interactions. The scaling of the couplings with height in the binary tree is controlled by $\sigma$.

The nature of the large-$N$ limit is determined by $\sigma$. If $\sigma < 0$, then the parity coupling between sub-systems will vanish in this limit, resulting in an free model. The case $\sigma = 0$ corresponds to the uniform model, where all couplings are equal. For $0 < \sigma < 1$ the couplings scale as a fractional power of the volume (which  is $2^k$ for a level-$k$ sub-system). For other hierarchical models, such as the HEA or HREM, this is known as the non-mean-field regime. The existence of a tractable non-mean-field regime is one of the main motivations for studying hierarchical spin-glasses, as it is more physically relevant to real-world systems than the mean-field regime offered by more widely studied models such as the Sherrington-Kirkpatrick model \cite{sherrington1975solvable}. In contrast, for $\sigma = 1$ the couplings scale linearly with the volume of the system, corresponding to a mean-field system. Finally, the case $\sigma > 1$ corresponds to the case where the couplings scale extra-linearly with system volume. It is important to note that this mean-field terminology does not apply here due to the non-local nature of the interactions. In particular, the existence of a phase transition will be demonstrated for $\sigma = 0$, and the ``non-mean-field" case $0 < \sigma < 1$ will turn out to be trivial. Furthermore, the ``mean-field" case $\sigma = 1$ does not even admit a well-defined thermodynamic limit.

The model may be analyzed for different choices of the scaling parameter $\sigma$: the cases $\sigma < 0$, $\sigma = 0$, and $0 < \sigma < 1$ will be separately considered below. First, however, it is worth noting that as the couplings no longer depend on the ``width'' index $p$, the parity recursion relation Eq.~\ref{eq:parityrecursion} simplifies to
\begin{equation}
    \label{eq:parityrecursion2}
    P_{k} = \frac{\sinh (\beta J_k) + P_{k-1}^2 \cosh (\beta J_k)}{\cosh (\beta J_k) + P_{k-1}^2 \sinh (\beta J_k)} \,,
\end{equation}
where $P_{k,p} = P_k$ for all $p$. The large-$N$ limit may be understood through the analysis of the fixed points of this recursion relation, which are denoted as $P_{\infty} := \lim_{k \rightarrow \infty} P_{k,1}$. Additionally, the relation for the width-symmetric partition function $Z_k = Z_{k,p}$ also simplifies, allowing for the free energy density ${f_n := -\beta^{-1} 2^{-n} \ln Z_{n}}$ to be solved for in terms of a geometrically-weighted sum:
\begin{equation}
    \label{eq:freeenergyrecursion}
    f_n = f_0 - \beta^{-1} \sum_{k=1}^{n} 2^{-k} \ln \left[ \cosh(\beta J_k) + P_{k-1}^2 \sinh(\beta J_k) \right] \,.  
\end{equation}
where ${f_0 = - \beta^{-1} \ln \left[ 2 \cosh(\beta J) \right]}$. 
This series converges, and thus the large-$N$ limit is well-defined, if and only if $\sigma < 1$.

\subsection{Free Model}
We first consider the case $\sigma < 0$. This model can be seen to be free because the coupling between sub-systems vanishes in the large-$N$ limit. The parity expectation value can also be shown to be trivial in this case; the recursion relation becomes $P_{\infty} = P_{\infty}^2$, and so either $P_{\infty} = 0$ or $P_{\infty} = 1$. Expanding around these reveals that the former fixed point is stable, whereas the latter is unstable.

\subsection{Uniform Model}
Setting $\sigma = 0$ corresponds to a uniform model, where the couplings are all the same. As illustrated in Fig.~\ref{fig:factorgraph}, the interaction terms in the Hamiltonian can be associated with the nodes in a balanced binary tree (the black squares), and as such there are $2N-1$ such terms. For $\sigma = 0$, each term contributes equally to the overall Hamiltonian, which means that $N$ interactions involve a single spin, $N/2$ spins involve 2 spins, $N/4$ interactions involve 4 spins, and so on, with just a single term involving the full system parity.

\subsubsection{The Parity Recursion Relation}
The fixed points of the parity recursion relation Eq.~\ref{eq:parityrecursion2} are given by the roots of a cubic polynomial, and are: $P_{\infty}^{(0)} = 1$ and
\begin{equation}
    P_{\infty}^{(\pm)} = \frac{\text{coth}(\beta J) -1}{2} \pm \frac{\text{sgn}(\beta J)}{2} \sqrt{ (\text{coth}(\beta J) - 3) (\text{coth}(\beta J) + 1)} \,.
\end{equation}
The 3 roots are degenerate for $\beta = \beta_c$, with
\begin{equation}
    \beta_c = \frac{\text{coth}^{-1}(3)}{J} = \frac{\ln 2}{2 J} \approx \frac{0.3466}{J} \,.
\end{equation}

It is convenient to consider the ferromagnetic and anti-ferromagnetic cases separately. For the ferromagnetic case with $J > 0$, the $P^{(\pm)}_{\infty}$ roots are complex for $\beta > \beta_c$, and they are real for $\beta < \beta_c$. However in this case $P_{\infty}^{(+)}$ is greater than 1 and thus represents an unphysical solution for all temperatures, since the parity expectation value must lie in the interval $[-1,1]$. Moreover, the $P^{(0)}$ fixed-point is stable for $\beta > \beta_c$ and unstable for $\beta < \beta_c$, whereas the $P^{(\pm)}$ fixed-points are stable for $\beta < \beta_c$ and unstable for $\beta > \beta_c$. We therefore conclude that 
\begin{equation}
    P_{\infty} = 
    \begin{cases} 
    P_{\infty}^{(-)} &\mbox{if } \beta \le \beta_c \\
    P_{\infty}^{(0)} & \mbox{if } \beta \ge \beta_c 
    \end{cases}
    \,.
\end{equation}
Just above the critical temperature the correlation function behaves as
\begin{equation}
    P_{\infty}^{(-)} = 1 - 2 \sqrt{2(\beta_c - \beta) J} + 4 (\beta_c - \beta) J + \mathcal{O}\left( (\beta_c - \beta)^{3/2} \right) \,,
\end{equation}
and therefore the parity expectation value is non-analytic around the critical temperature $\beta = \beta_c$, indicating a phase transition. 

In contrast, for the anti-ferromagnetic case with ${J < 0}$, all 3 fixed-points are real for all temperatures - although $P_{\infty}^{(+)} < -1$ and once again represents an unphysical solution. The $P^{(0)}$ fixed-point is always unstable, and the $P^{(\pm)}$ fixed-points are always stable. Therefore, the physically realized solution is $P_{\infty} = P_{\infty}^{(-)}$ and there is no non-analytic behavior. In order to understand the convergence of $P_{n,1}$ to the fixed point $P_{\infty}$, in Fig.~\ref{fig:Pkplot_Jboth} $P_{n,1}$ is depicted for $n=0,1,2,...,10$ (corresponding to $N=1, 2, 4, ..., 1024$) for both the anti-ferromagnetic and ferromagnetic cases. 

\begin{figure}
    \centering
    \includegraphics[width=0.6\textwidth]{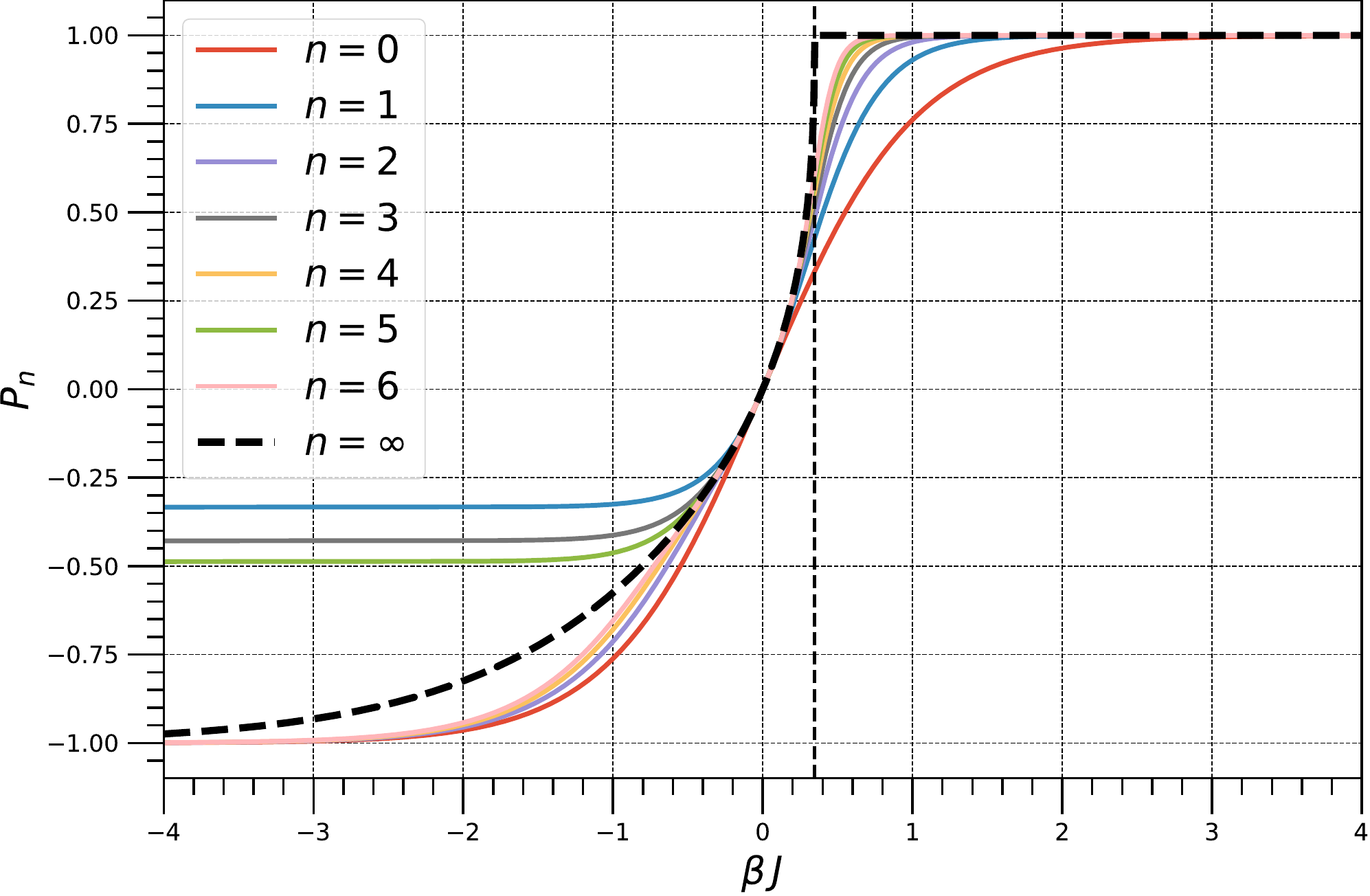}
    \caption{The parity expectation value $P_{n}$ as a function of inverse temperature in the uniform model, for both the anti-ferromagnetic $(J < 0)$ and ferromagnetic $(J>0)$ models. To emphasize the relation between the two cases, the anti-ferromagnetic curve has been plotted for negative $\beta J$ values and the ferromagnetic curve has been plotted for positive $\beta J$ values. The thick dashed line shows the analytic result for $P_{\infty}$, and the vertical dashed line marks the phase transition.}
    \label{fig:Pkplot_Jboth}
\end{figure}

To summarize, the uniform $(\sigma = 0$) width-symmetric model exhibits a phase transition in the ferromagnetic case where $J > 0$. The order parameter is the expectation value of the parity of the full system, ${P_{\infty} = \lim_{n \rightarrow \infty} \langle s_1 ... s_{2^n} \rangle}$. This quantity is 1 in the low-temperature parity-locked phase, and lies in the interval $(0,1)$ in the high-temperature phase.\footnote{Technically, it is more accurate to call $1 - P_{\infty}$ the order parameter. However, this terminology seems counter-intuitive since it suggests that the parity-locked phase with $P_{\infty} = 1$ is ``disordered".}  Importantly, there is no geometric frustration in this case as all the couplings are positive. In contrast, there is no phase transition in the anti-ferromagnetic model, and the parity is not locked for any finite temperature, although it does tends toward $-1$ as $\beta \rightarrow \infty$. The anti-ferromagnetic model does exhibit geometric frustration, leading to an extensive ground state degeneracy (see below). Unfortunately, although this model allows for many relations to be worked out, it does not seem possible to obtain an expression for the free energy density. This is because the sum in Eq.~\ref{eq:freeenergyrecursion} depends on $P_k$ for all $k$, and not just on the fixed points.

The nature of the phase transition can be further probed by studying the fixed points of the sub-parities via Eq.~\ref{eq:subPrecursion}. Let $P_{\infty}^{(L)} := \lim_{n \rightarrow \infty} P_{n,1}^{k,p}$ denote the large-$N$ limit of the sub-parities in the uniform coupling case, with $L = n - k$ denoting the length of the path from the root node $(n,1)$ to the destination node $(k,p)$. As shown in the SI, the ferromagnetic parity-locked phase exhibits $P_{\infty}^{(L)} = P_{\infty} = 1$. All the sub-parities are 1, which means that every single spin is pointing up. Additionally, just above the transition temperature the parities obey $P_{\infty}^{(L)} \sim 1 - c_L \sqrt{(\beta_c - \beta)J}$, for $c_L$ some positive constant. Thus, as the temperature is raised above the critical temperature the sub-parities become un-frozen at every scale.

\subsubsection{The Ground State Degeneracy}
In Algorithm~\ref{app:algorithm} an $O(N)$ algorithm for computing the ground state, energy, and degeneracy for a system of arbitrary couplings is presented. The uniform model represents a special case of the more general problem for which the algorithm may be simplified to a set of recursion relations, which we now derive. First, we note that the ferromagnetic model has a unique ground state corresponding to the all up configuration $s_i = 1$ $\forall i$. To calculate the ground state degeneracy of the anti-ferromagnetic model, first consider the system with $n=1$, corresponding to $N=2$ spins. The configurations $\{\downarrow\downarrow, \uparrow\downarrow, \downarrow\uparrow\}$ each have energy $-1$, while the configuration $\uparrow\uparrow$ has energy $3$ (setting $J=-1$ for convenience).\footnote{For brevity here we use $\downarrow$ to correspond to the $s=-1$ spin down state and $\uparrow$ to correspond to the $s=1$ spin up state.} So, the ground state degeneracy for level $n=1$ is $d_1 = 3$. Also, note that the first ground state, $\downarrow\downarrow$, has even parity, whereas the other two ground states have odd parity. Candidate ground states at level $n=2$ can be formed by concatenating two copies of these states. There are five even parity configurations, four are formed by joining two odd parity configurations, such as $(\uparrow\downarrow)(\downarrow\uparrow)$, and one is formed by two copies of the sole even parity configuration, $(\downarrow\downarrow)(\downarrow\downarrow)$. Each of these has energy $-1$. There are also four odd parity configurations of the form $(\downarrow\downarrow)(\uparrow \downarrow)$. These have energy $-3$, and so the ground state degeneracy at level $n=2$ is $d_2 = 4$. 

In considering the ground state degeneracy at level $n=3$, even parity configurations can be formed by joining the ground states at level $n=2$, these will have energy $-3-3+1 = -5$ (two copies of a $-3$ energy state plus a parity penalty term of $+1$). At first glance this might seem to exhaust the ground states at this level, but because the energy cost of flipping the overall parity happens to be the same as the gap between the ground and first excited states of the previous level system, odd parity ground states at level $n=3$ can be formed by joining an odd parity $n=2$ ground state with an even parity $n=2$ first excited state, for example $(\downarrow\downarrow\uparrow\downarrow)(\downarrow\downarrow\downarrow\downarrow)$. These states will also have energy $-3-1-1 = -5$. The ground state degeneracy at level $n=3$ is therefore $d_3 = 56$, consisting of 16 even parity states plus 40 odd parity states. A simple set of recursion relations that captures this pattern is:
\begin{equation}
    d_{n}^- = 2 \, d_{n-1}^- \, d_{n-1}^+ \,, 
    \qquad         
    d_n^+ = 
    \begin{cases} (d_{n-1}^-)^2 & n \text{ odd} \\ 
        (d_{n-1}^-)^2 + (d_{n-1}^+)^2 & n \text{ even} 
    \end{cases}  \,, 
    \qquad         
    d_n = 
    \begin{cases} d_n^- + d_n^+ & n \text{ odd} \\ 
        d_n^- & n \text{ even}
    \end{cases} \,.
\end{equation}
Here $d_n$ is the ground state degeneracy and $d_n^{\pm}$ are the degeneracies of the lowest odd/even parity states, respectively, which may or may not be themselves ground states. The recursion begins with all terms equal to 1 for $n=0$. The large-$N$ behavior of the degeneracy $d_n$ can be numerically inferred to be 
\begin{equation}
    d_n \sim A \, b^N \,,
\end{equation}
with $A$ an overall constant, and $b \approx 1.6234$. This corresponds to an extensive ground state degeneracy. 

\subsection{Parity-Locked Model}
Lastly, for $(0 < \sigma < 1)$ the fixed point equation, Eq.~\ref{eq:parityrecursion}, becomes:
\begin{equation}
    P_{\infty} = \text{sgn}(J) \,.
\end{equation}
As a result, the model is always in the ``parity-locked phase'', and there is no phase transition.

\section{Discussion \label{sec:discussion}}
In this work we have introduced a hierarchical parity model. Like all hierarchical models, the recursive property greatly facilitates an RG analysis, and in particular exact RG recursion relations may be derived for the parity of full system, as well as for certain subsets of spins. Moreover, these relations hold in the presence of \textit{arbitrary} couplings. The model also admits $O(N)$ algorithms for computing the ground state and the partition function. There are other well-known computationally tractable examples of spin models with arbitrary couplings. For example, one important result is that 2-spin Ising models defined over planar lattices are solvable in polynomial time \cite{barahona1982computational}. This result can be extended to graphs of fixed genus $g$ (planar graphs have genus $g=0$) \cite{regge2000combinatorial, galluccio2000new}. Importantly, these results are for general couplings. If the couplings are homogeneous and ferromagnetic, then the partition function of the Ising model defined over a general graph $G$ can be recognized as the Tutte polynomial $T_G(x,y)$, evaluated at a particular point in $\mathbb{R}^2$ \cite{welsh1990computational}. The computational complexity of computing this is {\#P-hard} in general, except for a few special isolated points as well as all points lying on the hyperbola $H_1: (x-1)(y-1) = 1$, in which case it is computable in polynomial time. Notably, all of these examples involve models with 2-spin interaction terms, whereas Eq.~\ref{eq:modeldefinition} is characterized by interaction terms involving $2^k$-spin terms, with $k=0,...,n$. This work thus expands the set of Ising spin system geometries known to remain tractable even in the presence of frustration.


In order to study the thermodynamic, large-$N$ limit of the model it is convenient to restrict the couplings to only depend on the height of the binary tree. For $J_{k,p} = 2^{k\sigma} J$, the model is trivially free for $\sigma < 0$, and ill-defined for $\sigma \ge 1$. The case $\sigma = 0$ corresponds to a uniform model where all the couplings are equal, and this model was shown to exhibit a thermal phase transition. The low-temperature phase is ``parity-locked''. Just above the transition, the parity correlators exhibit non-analyticity as they melt and become unfrozen. Finally, the case $0 < \sigma < 1$, which would correspond to a non-mean-field model in a hierarchical model with local interactions, turns out to be trivial in that the parities are always locked and there is no evidence of a phase transition.

Lastly, one of the key motivations for studying hierarchical models such as this one is to better understand the Renormalization Group for complex disordered systems. As a first step towards this goal, in Appendix~\ref{sec:disorder} we utilize the recursion relation for the partition function and the ground state algorithm to compute key thermodynamic quantities of the model at finite system size. However, more work is needed to properly investigate whether the disordered model exhibits a spin-glass transition.

\subsection*{Data Availability}
Data sharing not applicable to this article as no datasets were generated or analyzed during the current study.

\subsection*{Acknowledgments}
This work grew out of an earlier collaboration with Masoud Mohseni, and I wish to acknowledge useful discussions with him throughout this project. I would also like to thank Edward Parker, Federico Ricci-Tersenghi, and two anonymous referees for their useful feedback on earlier versions of this manuscript.

\appendix

\section{Ground State Algorithm \label{app:algorithm}}

\begin{algorithm}[H]
  \caption{Compute the ground state, energy, and degeneracy for each parity.}
  \label{alg:groundstate}
  \begin{algorithmic}[1]
    \Function{GroundState}{$k, p$}
      \If{$k=0$}
        \State $\bm{s}_{0,p}^{(0)-} = -1$, $E_{0,p}^{(0)-} = J_{0,p}$, $d_{0,p}^{(0)-} = 1$,
        \State $\bm{s}_{0,p}^{(0)+} = 1$, $E_{0,p}^{(0)+} = -J_{0,p}$, $d_{0,p}^{(0)+} = 1$,        
      \Else
        \State $\bm{s}_{k-1,2p-1}^{(0)-}, E_{k-1,2p-1}^{(0)-}, d_{k-1,2p-1}^{(0)-}, \bm{s}_{k-1,2p-1}^{(0)+}, E_{k-1,2p-1}^{(0)+}, d_{k-1,2p-1}^{(0)+} = \text{GroundState}(k-1, 2p-1)$
        \State $\bm{s}_{k-1,2p}^{(0)-}, E_{k-1,2p}^{(0)-}, d_{k-1,2p}^{(0)-}, \bm{s}_{k-1,2p}^{(0)+}, E_{k-1,2p}^{(0)+}, d_{k-1,2p}^{(0)+} = \text{GroundState}(k-1, 2p)$    
        \If{$E_{k-1,2p-1}^{(0)-} + E_{k-1,2p}^{(0)+} <  E_{k-1,2p-1}^{(0)+} + E_{k-1,2p}^{(0)-}$} \Comment{Compute the $-$ parity quantities}
            \State $\bm{s}_{k,p}^{(0)-} = \bm{s}_{k-1,2p-1}^{(0)-} \parallel \bm{s}_{k-1,2p}^{(0)+}$
            \State $E_{k,p}^{(0)-} = E_{k-1,2p-1}^{(0)-} + E_{k-1,2p}^{(0)+} + J_{k,p}$
            \State $d_{k,p}^- = d_{k-1,2p-1}^- \, d_{k-1,2p}^+$
        \ElsIf{$E_{k-1,2p-1}^{(0)-} + E_{k-1,2p}^{(0)+} >  E_{k-1,2p-1}^{(0)+} + E_{k-1,2p}^{(0)-}$}
             \State $\bm{s}_{k,p}^{(0)-} = \bm{s}_{k-1,2p-1}^{(0)+} \parallel \bm{s}_{k-1,2p}^{(0)-}$
            \State $E_{k,p}^{(0)-} = E_{k-1,2p-1}^{(0)+} + E_{k-1,2p}^{(0)-} + J_{k,p}$
            \State $d_{k,p}^- = d_{k-1,2p-1}^+ \, d_{k-1,2p}^-$
        \Else
            \State $\bm{s}_{k,p}^{(0)-} = \bm{s}_{k-1,2p-1}^{(0)-} \parallel \bm{s}_{k-1,2p}^{(0)+}$
            \State $E_{k,p}^{(0)-} = E_{k-1,2p-1}^{(0)-} + E_{k-1,2p}^{(0)+} + J_{k,p}$
            \State $d_{k,p}^- = d_{k-1,2p-1}^- \, d_{k-1,2p}^+ + d_{k-1,2p-1}^+ \, d_{k-1,2p}^-$
        \EndIf
        \If{$E_{k-1,2p-1}^{(0)-} + E_{k-1,2p}^{(0)-} <  E_{k-1,2p-1}^{(0)+} + E_{k-1,2p}^{(0)+}$} \Comment{Compute the $+$ parity quantities}
            \State $\bm{s}_{k,p}^{(0)+} = \bm{s}_{k-1,2p-1}^{(0)-} \parallel \bm{s}_{k-1,2p}^{(0)-}$
            \State $E_{k,p}^{(0)+} = E_{k-1,2p-1}^{(0)-} + E_{k-1,2p}^{(0)-} - J_{k,p}$
            \State $d_{k,p}^+ = d_{k-1,2p-1}^- \, d_{k-1,2p}^-$
        \ElsIf{$E_{k-1,2p-1}^{(0)-} + E_{k-1,2p}^{(0)-} >  E_{k-1,2p-1}^{(0)+} + E_{k-1,2p}^{(0)+}$}
             \State $\bm{s}_{k,p}^{(0)+} = \bm{s}_{k-1,2p-1}^{(0)+} \parallel \bm{s}_{k-1,2p}^{(0)+}$
            \State $E_{k,p}^{(0)+} = E_{k-1,2p-1}^{(0)+} + E_{k-1,2p}^{(0)+} - J_{k,p}$
            \State $d_{k,p}^+ = d_{k-1,2p-1}^+ \, d_{k-1,2p}^+$
        \Else
            \State $\bm{s}_{k,p}^{(0)+} = \bm{s}_{k-1,2p-1}^{(0)+} \parallel \bm{s}_{k-1,2p}^{(0)+}$
            \State $E_{k,p}^{(0)+} = E_{k-1,2p-1}^{(0)+} + E_{k-1,2p}^{(0)+} - J_{k,p}$
            \State $d_{k,p}^+ = d_{k-1,2p-1}^- \, d_{k-1,2p}^- + d_{k-1,2p-1}^+ \, d_{k-1,2p}^+$
        \EndIf 
      \EndIf
        \State \Return $\bm{s}_{k,p}^{(0)-}, E_{k,p}^{(0)-}, d_{k,p}^{(0)-}, \bm{s}_{k,p}^{(0)+}, E_{k,p}^{(0)+}, d_{k,p}^{(0)+}$ 
    \EndFunction
  \end{algorithmic}
\end{algorithm}

\section{Sub-Parity Recursion Relation and Fixed-Point Analysis \label{app:subparity}} 
Recall that an exact recursion relation may be derived for the ``sub-parity'' correlation function $P_{k,p}^{k',p'}$ (Eq.~\ref{eq:subPrecursion}):
\begin{align}
    P_{\bm{n}_0}^{\bm{n}_L} = \beta^{-1} \partial_{J_{\bm{n}_L}} \sum_{\bm{n} \in \mathcal{P}(\bm{n}_0, \bm{n}_L)} & \ln \Big[ \cosh(\beta J_{\bm{n}}) + P_{\text{Left}(\bm{n})} P_{\text{Right}(\bm{n})} \sinh(\beta J_{\bm{n}}) \Big] \,. 
\end{align}
Here $\bm{n}$ is used to denote the coordinates of a node in the binary tree, with $\bm{n}_0 = (k,p)$ and $\bm{n}_L = (k',p')$, and where $\mathcal{P}(\bm{n}_0, \bm{n}_L)$ is the unique path from $\bm{n}_0$ to $\bm{n}_L$. The length of the path is given by $L = k - k'$. Additionally, Left$(\bm{n})$, Right$(\bm{n})$ denote the left or right descendants of node $\bm{n}$. Fixed-point equations for these operators may be defined. Without loss of generality, assume that the path $\mathcal{P}_{\bm{n}_0, \bm{n}_L}$ is totally left, in other words $\bm{n}_L = \text{Left}^L(\bm{n}_0)$. 
First, note that the the terms in the sum for all $\bm{n} \neq \bm{n}_L$ depend on $J_{\bm{n}_L}$ implicitly through the left parities, and never the right parities, because the path is totally left. In contrast, the term in the sum corresponding to the destination node $\bm{n}_L$ depends explicitly on $J_{\bm{n}}$ through the hyperbolic cosine and sine terms. Therefore:
\begin{align}
    P_{\bm{n}_0}^{\bm{n}_L} = \beta^{-1} \sum_{\bm{n} \in \mathcal{P}(\bm{n}_0, \bm{n}_{L-1})} & \frac{\left( \frac{\partial P_{\text{Left}(\bm{n})}}{\partial J_{\bm{n}_L}}\right) P_{\text{Right}(\bm{n})} \sinh(\beta J_{\bm{n}}) }{\cosh(\beta J_{\bm{n}}) + P_{\text{Left}(\bm{n})} P_{\text{Right}(\bm{n})} \sinh(\beta J_{\bm{n}})} \\
    &+ \Bigg[ \frac{\sinh(\beta J_{\bm{n}_L}) +  P_{\text{Left}(\bm{n}_L)} P_{\text{Right}(\bm{n}_L)} \cosh(\beta J_{\bm{n}_L})}{\cosh(\beta J_{\bm{n}_L}) + P_{\text{Left}(\bm{n}_L)} P_{\text{Right}(\bm{n}_L)} \sinh(\beta J_{\bm{n}_L})} \Bigg] \nonumber \,. 
\end{align}
The second term can be recognized as being simply $P_{\bm{n}_L}$, the parity of the sub-system associated with the node $\bm{n}_L$. To evaluate the first term, use the chain rule:
\begin{equation}
    \frac{\partial P_{\text{Left}(\bm{n}_{\ell})}}{\partial J_{\bm{n}_L}} = 
    \frac{\partial P_{\text{Left}(\bm{n}_L)}}{\partial J_{\bm{n}_L}} 
    \frac{\partial P_{\text{Left}(\bm{n}_{L-1})}}{\partial P_{\text{Left}(\bm{n}_{L})}}
    \frac{\partial P_{\text{Left}(\bm{n}_{L-2})}}{\partial P_{\text{Left}(\bm{n}_{L-1})}}
    \ldots
    \frac{\partial P_{\text{Left}(\bm{n}_{\ell})}}{\partial P_{\text{Left}(\bm{n}_{\ell+1})}} \,,
\end{equation}
where $\ell=0,...,L-1$. Finally, from Eq.~\ref{eq:parityrecursion} it can be worked out that
\begin{equation}
    \beta^{-1} \frac{\partial P_{\bm{n}}}{\partial J_{\bm{n}}} = \frac{\left(1 - P_{\text{Left}(\bm{n})}^2 P_{\text{Right}(\bm{n})}^2 \right) }{\left( \cosh(\beta J_{\bm{n}}) + P_{\text{Left}(\bm{n})} P_{\text{Right}(\bm{n})} \sinh(\beta J_{\bm{n}}) \right)^2} \,,
\end{equation}
and that
\begin{equation}
    \frac{\partial P_{\bm{n}}}{\partial P_{\text{Left}(\bm{n})}} = \frac{P_{\text{Right}(\bm{n})}}{\left( \cosh(\beta J_{\bm{n}}) + P_{\text{Left}(\bm{n})} P_{\text{Right}(\bm{n})} \sinh(\beta J_{\bm{n}}) \right)^2} \,.
\end{equation}

Next, take all the couplings to be equal and take all the parities $P_{\bm{n}}$ to be fixed points, i.e. $P_{\bm{n}} = P_{\infty}$. Then,
\begin{equation}
    P_{\infty}^{(L)} = P_{\infty}\left[ 1 + \frac{(1 - P_{\infty}^4) \sinh(\beta J)}{\left(\cosh(\beta J) + P_{\infty}^2 \sinh(\beta J) \right)^3}  \sum_{\ell=0}^{L-1} \left( \frac{P_{\infty}}{\left( \cosh(\beta J) + P_{\infty}^2 \sinh(\beta J) \right)^2} \right)^{L-\ell} \right] \,,
\end{equation}
where the first factor has been simplified using the fixed-point equation. The series is geometric, and can be summed for general $L$, although the resulting expression is not particularly illuminating. However, it is easy to see that $P_{\infty}^{(L)} = 1$ for $P_{\infty} = 1$. Furthermore, near the phase transition ${P_{\infty} \sim 1 - c \sqrt{(\beta_c - \beta)J}}$, with $c = 2 \sqrt{2}$, and it can also be shown that ${P_{\infty}^{(L)} \sim 1 - c_L \sqrt{(\beta_c - \beta) J}}$, with $c_L = \sqrt{2} (1 + 2^{-L})$. Thus, all the sub-parities exhibit the same non-analyticity as the phase transition is crossed.

\section{Disordered Model \label{sec:disorder}}
The model may also be studied in the presence of disorder. The motivation for doing so is that the recursive structure of the model could allow for greater insight into the Renormalization Group and how it relates to replica symmetry breaking for spin-glasses, as was done for example in \cite{castellana2010hierarchical, castellana2010renormalization, castellana2011real, decelle2014ensemble}. Unfortunately, but unsurprisingly, the presence of disorder greatly complicates the analytic treatment of the recursion relations. Therefore, in this Appendix we will simply show how the recursions may be numerically solved to compute certain thermodynamic quantities, and we leave a more thorough analysis of the disordered model for future work. The partition may be computed exactly for a given disorder realization and for finite system size. For concreteness, we will consider a $\pm J$ disordered model, where the couplings are distributed according to
\begin{equation}
    J_{k,p} = 
    \begin{cases} 
    + J & \text{ with probability } p \\
    - J & \text{ with probability } 1-p \,.
    \end{cases}    
\end{equation}


We used the automatic differentiation capabilities of the Python software package Jax \cite{jax2018github} to exactly compute the following thermodynamic quantities for finite $N$: the free energy ${F = - \ln Z/\beta}$, the energy ${E = - \partial_{\beta} \ln Z}$, the heat capacity ${C_V = \beta^2 \partial_{\beta}^2 \ln Z}$, and the entropy ${S = \beta (E - F)}$. Fig.~\ref{fig:disordered_model_thermodynamics} depicts contour plots of these quantities as functions of inverse temperature and disorder probability for a system size of $N=256$ spins ($n=9$).\footnote{The code for this analysis has been made available here: \url{https://github.com/gshartnett/hierarchical}.}

\begin{figure*}
    \centering
    \includegraphics[width=1.0\textwidth]{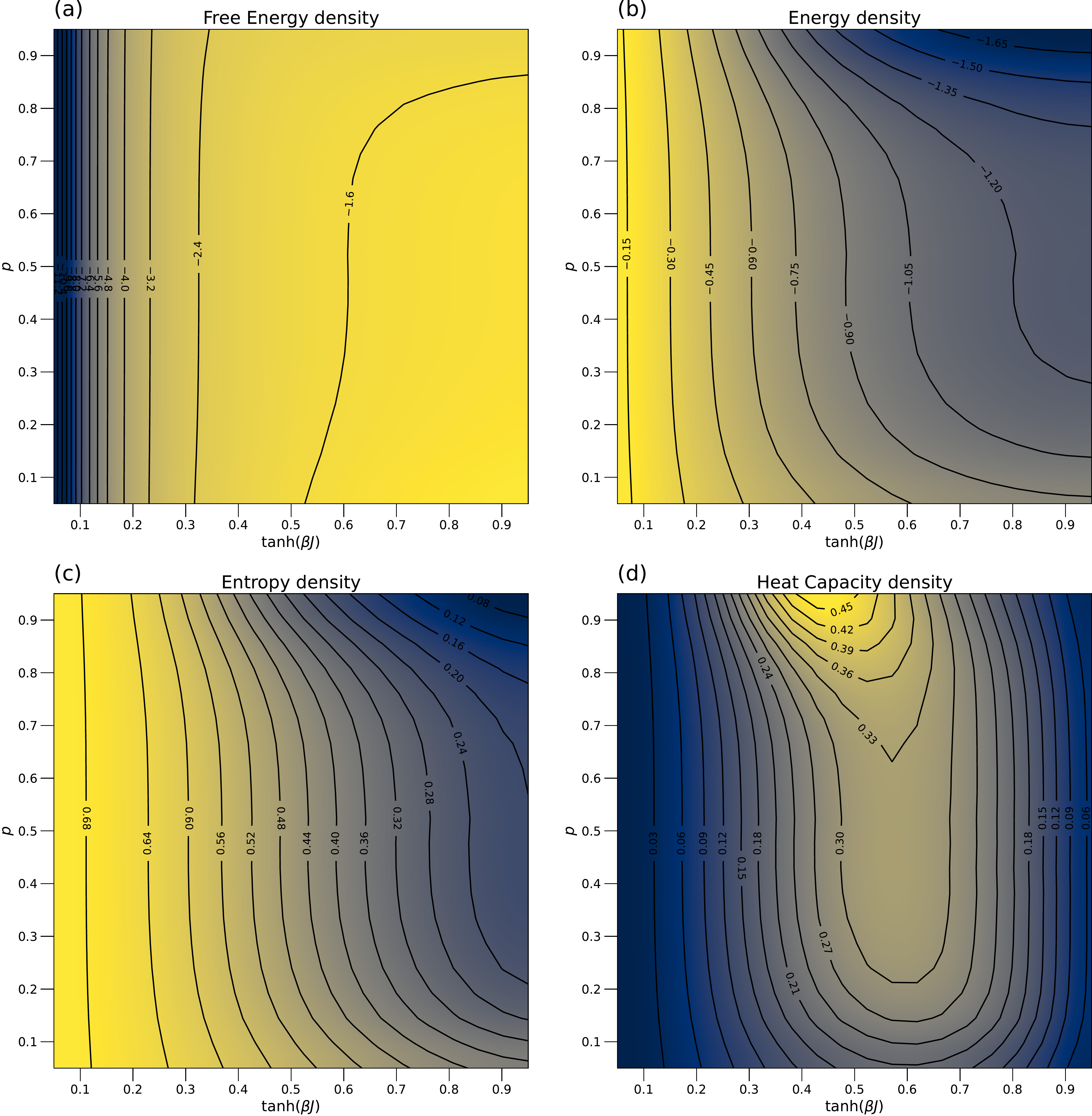}
    \caption{Contour plots depicting the (a) free energy density, (b) energy density, (c) entropy density, and (d) heat capacity density as a function of $\tanh(\beta J)$ and disorder probability $p$ for a system size of $N=512$ spins ($n=9$) and 100 disorder averages.}
    \label{fig:disordered_model_thermodynamics}
\end{figure*}

Automatic differentiation may also be used to compute derivatives with respect to the individual couplings, $J_{k,p}$. In particular, the spin-glass susceptibility matrix is defined as:
\begin{equation}
    \chi_{p p'} = \langle s_p s_{p'} \rangle - \langle s_p \rangle \langle s_{p'} \rangle = \frac{1}{\beta^2} \frac{\partial^2 \ln Z}{\partial J_{0,p} \partial J_{0,p'}} \,,
\end{equation}
where $p, p' =1, ..., 2^{n}$ index the spins. (Recall that $J_{0,p}$ is just the local external field at spin $p$.) From this matrix the spin-glass susceptibility, defined as $\chi_{SG} = \frac{\beta^2}{N} \sum_{p,p'} \chi_{ij}^2$, may be easily computed. Fig.~\ref{fig:disordered_model_chi} depicts a contour plot for $\chi_{SG}$ in the $(\beta, p)$ plane and a histogram of the eigenvalues of the $\chi_{pp'}$ matrix for the case $p=1/2$. 

\begin{figure*}
    \centering
    \includegraphics[width=1.0\textwidth]{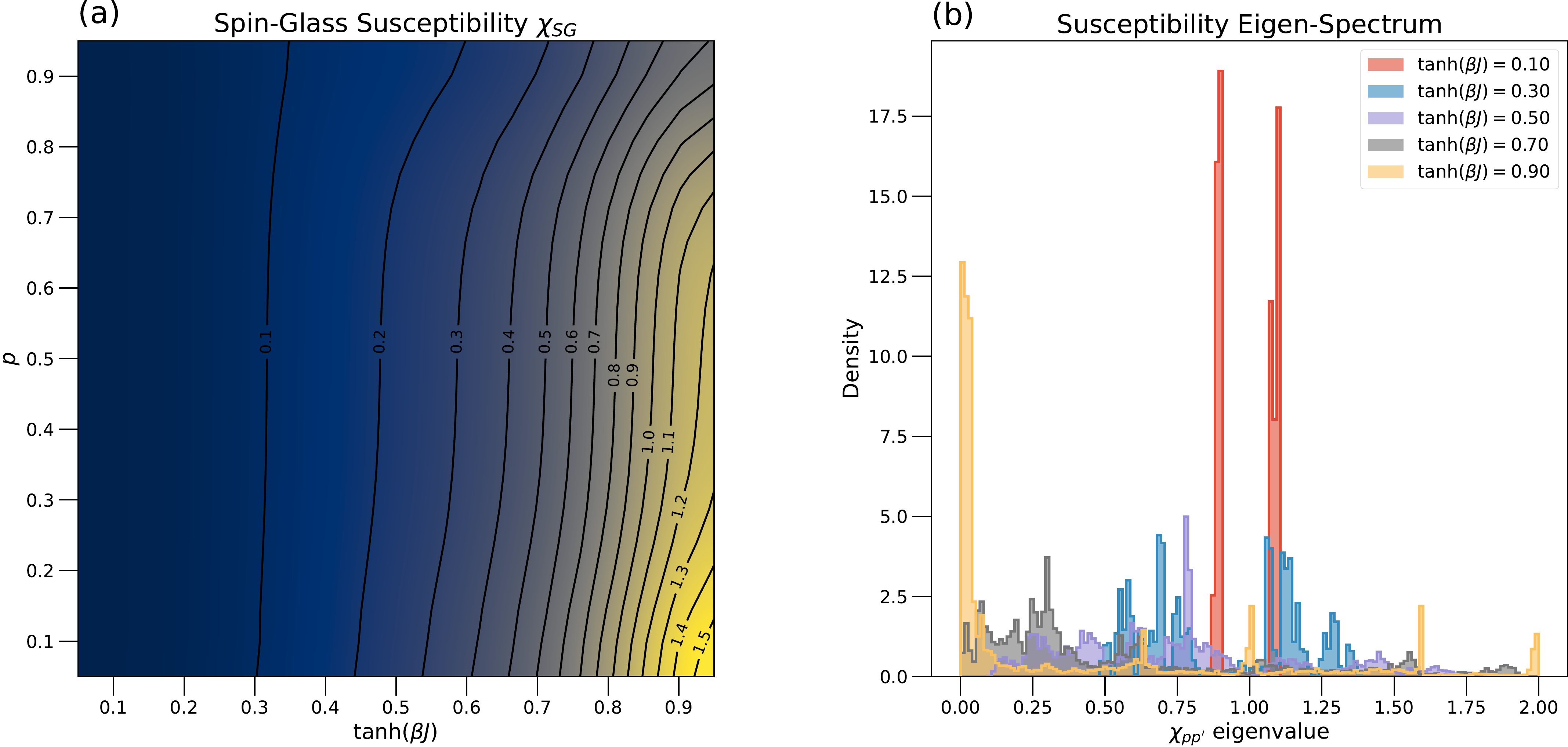}
    \caption{(a) A contour plot of the spin-glass susceptibility $\chi_{SG}$. (b) A histogram of the $\chi_{p p'}$ eigenvalues for the case $p=1/2$. Both plots correspond to a system size of $N=512$ spins ($n=9$) and 100 disorder averages.}
    \label{fig:disordered_model_chi}
\end{figure*}

Lastly, Algorithm~\ref{alg:groundstate} may be used to compute the ground state degeneracy of the model. The degeneracy of the anti-ferromagnetic uniform model, which corresponds to the $p=0$ case of the disordered model considered here, scaled as $d_n \sim A \, b^N$, with $N=2^n$. This suggests that the degeneracy in the disordered case grows as ${\mathbb{E}_J \log_2 d_n = \log_2 A + N \, \log_2 b}$. Fig.~\ref{fig:disordered_model_bfit} depicts the result of performing a standard linear fit to log degeneracy, resulting in an estimate for the base $\hat{b}(p)$. The fit utilized 1000 system samples each for the system sizes given by $n = 3, 4, 5, 6, 7$. The ground state degeneracy decreases as the relative fraction of anti-ferromagnetic couplings decreases, i.e., as $p$ grows. As $p \rightarrow 1$, $\hat{b}(p) \rightarrow$ 1, in agreement with the non-extensive ground state energy observed for the uniform ferromagnetic model. 

\begin{figure*}[ht!]
    \centering
    \includegraphics[width=0.6\textwidth]{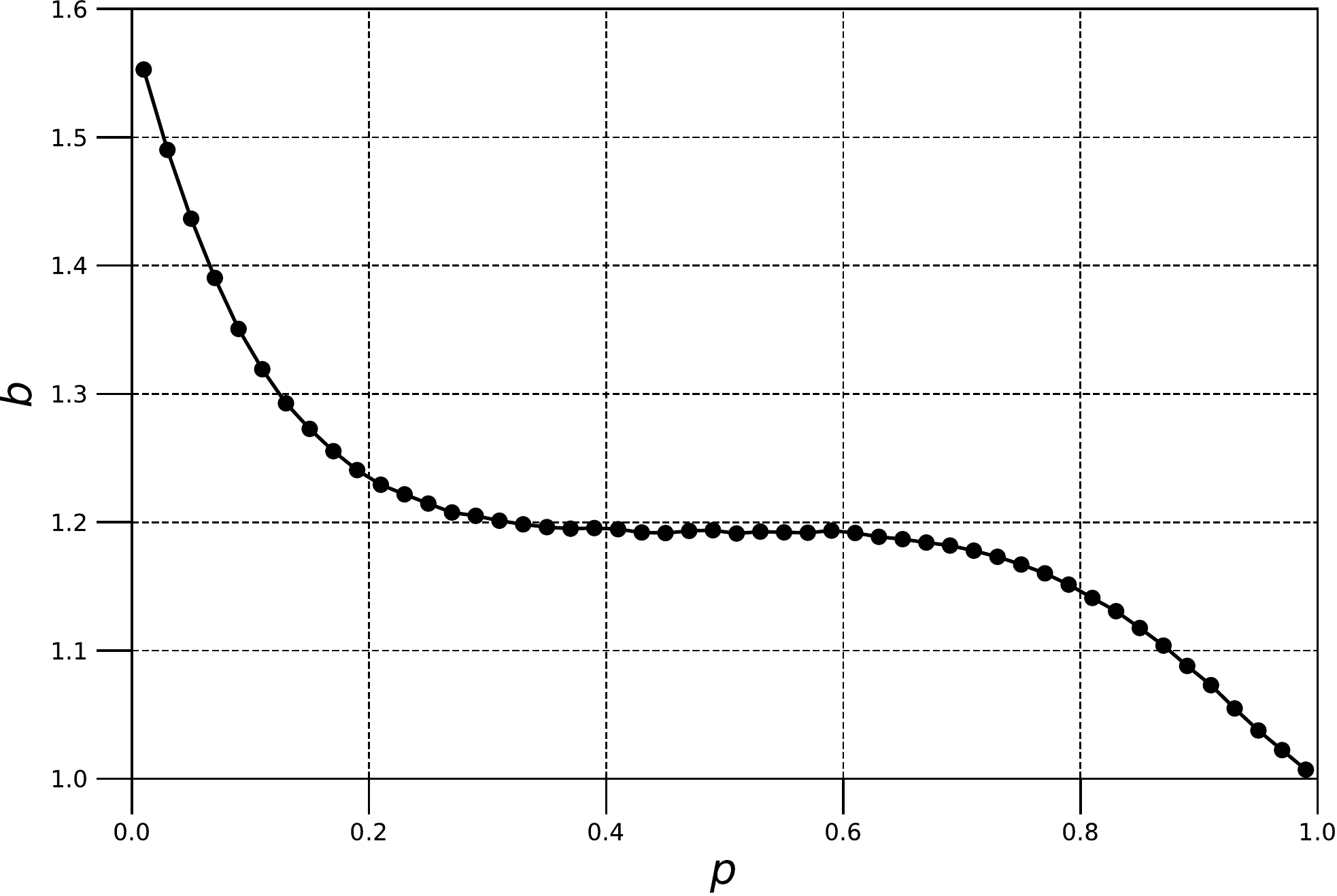}
    \caption{The fitted base $b$ for the exponential growth of the ground state degeneracy as a function of $p$.}
    \label{fig:disordered_model_bfit}
\end{figure*}

\newpage
\bibliography{refs}
\bibliographystyle{JHEP}

\end{document}